\newtheorem{theorem}{Theorem}[section]
\newtheorem{lemma}[theorem]{Lemma}
\newtheorem{corollary}[theorem]{Corollary}
\theoremstyle{definition}
\newtheorem{example}[theorem]{Example}
\theoremstyle{remark}
\newtheorem{remark}[theorem]{Remark}
\numberwithin{equation}{section}
\begin{document}
\title[A modified four circulant construction for self-dual codes]{New extremal binary self-dual codes from a
modified four circulant construction}
\author{Abidin Kaya}
\author{Bahattin Yildiz}
\author{Abdullah Pasa}
\address{Department of Mathematics, Fatih University, 34500 Istanbul, Turkey}
\email{nabidin@gmail.com,byildiz@fatih.edu.tr,abdullah.pasa@fatih.edu.tr}
\subjclass[2010]{Primary 94B05; Secondary 94B99}
\keywords{extremal self-dual codes, circulant matrices, reverse-circulant matrices, four circulant construction}

\begin{abstract}
In this work, we propose a modified four circulant construction for
self-dual codes and a bordered version of the construction using the
properties of $\lambda$-circulant and $\lambda$-reverse circulant matrices.
By using the constructions on $\mathbb{F}_{2}$, we obtain new binary codes
of lengths 64 and 68. We also apply the constructions to the ring $R_2$ and
considering the $\mathbb{F}_{2}$ and $R_{1}$-extensions, we obtain new
singly-even extremal binary self-dual codes of lengths 66 and 68. More
precisely, we find 3 new codes of length 64, 15 new codes of length 66 and
22 new codes of length 68. These codes all have weight enumerators with
parameters that were not known to exist in the literature.
\end{abstract}

\maketitle

\section{Introduction}

Self-dual codes, especially over the binary field, have been a central topic
of research in coding theory for a considerable time now. This interest is
justified by the many combinatorial and algebraic objects that are related
to self-dual codes. The recent studies on self-dual codes are intensified
around classifying all extremal self-dual codes of given lengths. The
possible weight enumerators of all extremal binary self-dual codes of
lengths up to 100 are given in \cite{conway} and \cite{dougherty1}. Many of
these possible weight enumerators have parameters in them that fall in
certain intervals. A lot of recent research on extremal self-dual codes has
gone towards finding extremal self-dual codes with new parameters. \cite%
{karadeniz64}, \cite{karadenizfc}, \cite{kaya}, \cite{kayayildiz}, \cite%
{jacodes}, \cite{tufekci}, \cite{yankov} are examples of such works.

There are numerous constructions for extremal self-dual codes. The common
theme is a computer search, usually using the magma computer algebra (\cite%
{magma}), over a reduced search field. There are different methods to reduce
the search field. Some of the most common such methods use a special type of
matrices called \textit{circulant matrices}. An $n\times n$ circulant matrix
is determined uniquely by its first row. So, searching through such matrices
over an alphabet $A$, requires searching over $|A|^n$ matrices instead of $%
|A|^{n^2}$ matrices. Circulant matrices also have an important algebraic
property in that they commute with respect to the matrix multiplication.
These properties of circulant matrices have been used quite effectively in
such popular constructions in literature as double circulant, bordered
double circulant and four circulant constructions.

Another method of reducing the search field is to consider the above
mentioned constructions over suitable rings that are endowed with
orthogonality-preserving Gray maps. By considering the lifts of good binary
self-dual codes over such rings, the search field can be reduced even
further, giving way to many extremal codes with new parameters. This idea
has been used quite effectively in the above-mentioned works.

In this work, we modify the four-circulant construction, using $\lambda$%
-circulant, reverse-circulant and $\lambda$-reverse-circulant matrices.
Applying these modified constructions to three well-known alphabets, $%
\mathbb{F}_2$, $R_1 = \mathbb{F}_2+u\mathbb{F}_2$ and $R_2 = \mathbb{F}_{2}+u%
\mathbb{F}_{2}+v\mathbb{F}_{2}+uv\mathbb{F}_{2}$, we are able to find many
extremal binary self-dual codes with new parameters, thus extending the
database of all known such codes in the literature.

The rest of the work is organized as follows. In section 2, we give the
preliminaries for our constructions. We recall the alphabets that we have
mentioned above and we give a thorough background on circulant and
reverse-circulant matrices. In section 3, we introduce a modified
four-circulant construction, using $\lambda$-circulant and $\lambda$%
-reverse-circulant matrices as well as a bordered four circulant
construction. In section 4, we apply the constructions to the alphabets $%
\mathbb{F}_2$, $R_1 = \mathbb{F}_2+u\mathbb{F}_2$ and $R_2 = \mathbb{F}_{2}+u%
\mathbb{F}_{2}+v\mathbb{F}_{2}+uv\mathbb{F}_{2}$, as a result of which we
obtain many new extremal binary self-dual codes of lengths 64, 66 and 68.
More explicitly, we construct 3 new codes of length 64, 15 new codes of
length 66 and 22 new codes of length 68, adding these to the literature of
known such codes. The results are tabulated for each length.

\section{Preliminaries\label{preliminaries}}

\subsection{Basics}

Let $R$ be a finite ring. A linear \emph{code} $C$ of length $n$ \ over $R$
is an $R$-submodule of $R^{n}$. The elements of $C$ are called \emph{%
codewords}.

Let $\langle u,v\rangle $ be inner product of two codewords $u$ and $v$ in $%
R^{n}$ which is defined as $\langle u,v\rangle =\sum_{i=1}^{n}u_{i}v_{i}$,
where the operations are done in $R^{n}$. The \emph{dual code} of a code $C$
is $C^{\perp }=\{v\in R^{n}\mid \langle u,v\rangle =0$ for all $v\in C\}$.
If $C\subseteq C^{\perp }$, $C$ is called \emph{self-orthogonal}$,$ and $C$
is\emph{\ self-dual} if $C=C^{\perp }$.

The main case of interest for us is the case when $R=\mathbb{F}_2$, in which
case we obtain the usual binary self-dual codes. Binary self-dual codes are
called Type II if the weights of all codewords are multiples of 4 and Type I
otherwise. Rains finalized the upper bound for the minimum distance $d$ of a
binary self-dual code of length $n$ in \cite{rains} as $d\leq $ $4\lfloor
\frac{n}{24}\rfloor +6$ if $n\equiv 22\pmod{24}$ and $d\leq $ $4\lfloor
\frac{n}{24}\rfloor +4$, otherwise. A self-dual binary code is called
\textit{extremal }if it meets the bound.

The ring family of $R_{k}$ is introduced in \cite{us} as follows:%
\begin{equation*}
R_{k}=\mathbb{F}_{2}[u_{1},u_{2},\ldots
,u_{k}]/(u_{i}^{2}=0,u_{i}u_{j}=u_{j}u_{i})
\end{equation*}%
Leaving the details of these rings to the mentioned work, we just give basic
properties of first two members of this family which are used in this work.

The ring $R_{1}$ that is $\mathbb{F}_{2}+u\mathbb{F}_{2}$ is the first
example of the ring family of $R_{k}$. This ring was introduced in \cite%
{Bachoc} for constructing lattices and it has been studied quite extensively
in the literature of coding theory. Second member of \ $R_{k}$ is the ring $%
R_{2}$ that is $\mathbb{F}_{2}+u\mathbb{F}_{2}+v\mathbb{F}_{2}+uv\mathbb{F}%
_{2}.$ The ring $R_{2}$ was first introduced by Yildiz and Karadeniz in \cite{YK}. Unlike the ring $R_{2},$ the ring $R_{1}$ is finite chain ring.
Moreover, the ring $R_{k}$ is not finite chain ring for $k\geq 2$.

A linear distance preserving map from $R_{1}^{n}$ to $\mathbb{F}_{2}^{2n}$
is described in \cite{Dougherty} in terms of vectors as:%
\begin{equation*}
\phi _{1}(\bar{a}+u\bar{b})=(\bar{b},\bar{a}+\bar{b}),
\end{equation*}%
where $\bar{a},\bar{b}\in \mathbb{F}_{2}^{2n}$.

In \cite{YK}, the Lee weight was defined as the Hamming weight of the image
under a Gray map which was extended from that in \cite{Dougherty}.
Specifically, the following map for $R_{2}$ is given in \cite{YK} as
follows:
\begin{equation*}
\phi _{2}(\bar{a}+u\bar{b}+v\bar{c}+uv\bar{d})=(\bar{d},\bar{c}+\bar{d},\bar{%
b}+\bar{d},\bar{a}+\bar{b}+\bar{c}+\bar{d}),
\end{equation*}%
where $\bar{a},\bar{b},\bar{c},\bar{d}\in \mathbb{F}_{2}^{n}$.

One of the important properties of the Gray map that was given for this
family of rings was its orthogonality-preserving property. This gives rise
to the following important lemma which was given as Theorem 4 in \cite%
{SelfdualRk}:

\begin{lemma}
Let $C$ be a self-dual code over $R_{k}$ of length $n$ then $\phi _{k}(C)$
is a binary self-dual code of length $2^{k}n$. If $C$ is Type II code then $%
\phi _{k}(C)$ is Type II and if $C$ is Type I then $\phi _{k}(C)$ is Type I.
\end{lemma}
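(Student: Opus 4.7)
The plan is to prove the lemma in three conceptual steps: show that $\phi_k$ is an $\mathbb{F}_2$-linear bijection (so $\phi_k(C)$ is a binary linear code of the correct length), check that orthogonality is preserved, and finally match cardinalities so that the inclusion forced by orthogonality becomes an equality. The Type~I/II assertion will then follow almost by definition of the Lee weight on $R_k$.

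First I would verify from the explicit formulas for $\phi_1$ and $\phi_2$ given in the excerpt (and their natural inductive generalization $\phi_k(\bar a+u_k\bar b)=(\phi_{k-1}(\bar b),\phi_{k-1}(\bar a)+\phi_{k-1}(\bar b))$) that $\phi_k\colon R_k^n\to\mathbb{F}_2^{2^k n}$ is $\mathbb{F}_2$-linear and bijective. Hence $\phi_k(C)$ is automatically a binary linear code of length $2^k n$ with $|\phi_k(C)|=|C|$.

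Second, I would establish the orthogonality-preserving property: for $x,y\in R_k^n$ with $\langle x,y\rangle_{R_k}=0$ one has $\langle\phi_k(x),\phi_k(y)\rangle_{\mathbb{F}_2}=0$. For $k=1$, writing $x=\bar a+u\bar b$, $y=\bar c+u\bar d$, orthogonality over $R_1$ together with $u^2=0$ amounts to $\bar a\cdot\bar c=0$ and $\bar a\cdot\bar d+\bar b\cdot\bar c=0$, while a direct expansion gives $\langle\phi_1(x),\phi_1(y)\rangle=\bar b\cdot\bar d+(\bar a+\bar b)\cdot(\bar c+\bar d)=\bar a\cdot\bar c+\bar a\cdot\bar d+\bar b\cdot\bar c$, which vanishes. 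The general case follows by induction on $k$ from the recursive formula for $\phi_k$, using the relations $u_i^2=0$ and $u_iu_j=u_ju_i$ to separate the bilinear form into pieces handled by the inductive hypothesis. In particular $\phi_k(C)\subseteq\phi_k(C)^\perp$.

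Third, I would count sizes. Since $C$ is self-dual over $R_k$ we have $|C|^2=|R_k|^n=2^{2^kn}$, so $|C|=2^{2^{k-1}n}$, and consequently $|\phi_k(C)|=2^{2^kn/2}$, exactly the size required for a binary self-dual code of length $2^kn$. Combined with self-orthogonality, this forces $\phi_k(C)=\phi_k(C)^\perp$. For the Type~I/II part, recall that by construction the Lee weight on $R_k$ is defined as the Hamming weight of the Gray image, so $\mathrm{wt}_H(\phi_k(x))=\mathrm{wt}_L(x)$ for every $x$; thus divisibility of all Lee weights by $4$ in $C$ transfers verbatim to $\phi_k(C)$, and the Type~I case follows from the fact that every binary self-dual code has even weights.

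The main obstacle is the inductive verification of orthogonality for arbitrary $k$, since the Gray map mixes the $2^k$ coordinate blocks in a nontrivial way and the inner product over $R_k$ decomposes into $2^k$ pieces indexed by monomials in the $u_i$. The cleanest execution is probably the inductive formula above, which reduces the computation at level $k$ to the already-established identity at level $k-1$ together with the $k=1$ calculation; alternatively one can invoke the statement directly from \cite{SelfdualRk}, to which the lemma is attributed.
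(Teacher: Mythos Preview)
The paper does not actually prove this lemma: it is quoted verbatim as Theorem~4 of \cite{SelfdualRk} and no argument is supplied. So there is no in-paper proof to compare against; your outline is supplying what the paper simply imports by citation, and your final remark (``alternatively one can invoke the statement directly from \cite{SelfdualRk}'') is in fact exactly what the paper does.

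Your sketch is sound and follows the standard route taken in \cite{SelfdualRk}: linearity/bijectivity of $\phi_k$, preservation of orthogonality, a cardinality count using that $R_k$ is Frobenius so $|C|\,|C^\perp|=|R_k|^n$, and the weight-preserving property of the Gray map for the Type~I/II statement. One small tightening: the inductive step for orthogonality needs a slightly stronger inductive hypothesis than the bare implication ``$\langle x,y\rangle_{R_{k-1}}=0\Rightarrow\langle\phi_{k-1}(x),\phi_{k-1}(y)\rangle_{\mathbb{F}_2}=0$''. What you actually use, when you expand $\langle\phi_k(x),\phi_k(y)\rangle$ via the recursion, is an \emph{identity} of the form $\langle\phi_{k-1}(x),\phi_{k-1}(y)\rangle_{\mathbb{F}_2}=\epsilon_{k-1}\bigl(\langle x,y\rangle_{R_{k-1}}\bigr)$ for a fixed $\mathbb{F}_2$-linear form $\epsilon_{k-1}$ on $R_{k-1}$ (for $k=1$ your computation gives $\epsilon_1(\alpha+u\beta)=\alpha+\beta$). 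With that strengthened hypothesis the induction closes cleanly; without it, the cross terms $\langle\phi_{k-1}(\bar a),\phi_{k-1}(\bar d)\rangle+\langle\phi_{k-1}(\bar b),\phi_{k-1}(\bar c)\rangle$ cannot be controlled from $\langle\bar a,\bar d\rangle+\langle\bar b,\bar c\rangle=0$ alone. You already flag this step as the main obstacle, so this is just a note on how to make it go through.
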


\subsection{The Circulant and Reverse-Circulant Matrices}

Since our constructions rely heavily on circulant and reverse-circulant
matrices, we give some of their properties in this subsection. Most of what
is explained here can be found in abstract algebra sources. However, we
wanted to put together some of these in a complete form.

With $R$ a commutative ring with identity, let $\sigma$ be the permutation
on $R^n$ that corresponds to the right shift, i.e.
\begin{equation}
\sigma(a_1,a_2, \dots, a_n) = (a_n,a_1, \dots, a_{n-1}).
\end{equation}
A circulant matrix is a square matrix where each row is a right-circular
shift of the previous row. In other words, if $\overline{r}$ is the first
row, a typical circulant matrix is of the form
\begin{equation}
\left[
\begin{array}{c}
\overline{r} \\ \hline
\sigma(\overline{r}) \\ \hline
\sigma^2(\overline{r}) \\ \hline
\vdots \\ \hline
\sigma^{n-1}(\overline{r})%
\end{array}
\right].
\end{equation}
It is clear that, with $T$ denoting the permutation matrix corresponding to
the $n$-cycle $(123...n)$, a circulant matrix with first row $(a_1,a_2,
\dots, a_n)$ can be expressed as a polynomial in $T$ as:
\begin{equation*}
a_1I_n+a_2T+a_3T^2+ \cdots + a_nT^{n-1}.
\end{equation*}
Since $T$ satisfies $T^n=I_n$, this shows that circulant matrices commute.

A reverse-circulant matrix is a square matrix where each row is a
left-circular shift of the previous row. It is clear to see that if $%
\overline{r}$ is the first row, a reverse-circulant matrix is of the form
\begin{equation}
\left[
\begin{array}{c}
\overline{r} \\ \hline
\sigma^{-1}(\overline{r}) \\ \hline
\sigma^{-2}(\overline{r}) \\ \hline
\vdots \\ \hline
\sigma^{-(n-1)}(\overline{r})%
\end{array}
\right].
\end{equation}

The following lemma recaptures some of the basic properties of circulant and
reverse-circulant matrices that can easily be verified:

\begin{lemma}
\label{circrevcirc} \textbf{(i)} Circulant matrices commute under matrix
multiplication. Reverse-circulant matrices do not necessarily commute under
multiplication.

\textbf{(ii)} Reverse-circulant matrices are symmetric.

\textbf{(iii)} Suppose that $A$ is a circulant matrix and $D$ is the
backdiagonal matrix with $1$ on its backdiagonal and $0$ elsewhere. Then
both $AD$ and $DA$ are reverse circulant.
\end{lemma}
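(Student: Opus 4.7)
The plan is to handle the three parts using explicit entrywise formulas, supplemented by the polynomial-in-$T$ description already introduced in the text.

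For part (i), I would invoke the representation $a_1 I_n + a_2 T + \cdots + a_n T^{n-1}$ for circulant matrices already established in the preceding discussion: two circulants are then polynomials in the single matrix $T$, and polynomials in one matrix automatically commute. For the negative statement, I would exhibit a concrete pair of reverse-circulants whose products disagree; the smallest interesting case is $n=3$, since for $n=2$ reverse-circulants coincide with circulants. Two reverse-circulants with first rows $(1,0,0)$ and $(0,1,0)$ over $\F_2$ (or any ring with $1\neq 0$) suffice, as a direct multiplication check shows that $AB$ and $BA$ differ.

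For part (ii), I would fix notation for the entries: writing $R$ for a reverse-circulant with first row $(a_1,\dots,a_n)$, the left-shift definition gives row $i$ equal to $(a_i,a_{i+1},\dots,a_n,a_1,\dots,a_{i-1})$, so that $R_{i,j}=a_{\,(i+j-1)\bmod n}$ where the residue is taken in $\{1,\dots,n\}$. This expression is symmetric in $i$ and $j$, so $R_{i,j}=R_{j,i}$ and the matrix is symmetric. This is the routine bookkeeping step; the only subtlety is being careful about representatives modulo $n$.

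For part (iii), I would first record the dual entry formula for the circulant: $A_{i,k}=a_{(k-i+1)\bmod n}$, and for the backdiagonal $D_{k,j}=1$ iff $k+j=n+1$. Multiplying out, $(AD)_{i,j}=A_{i,n+1-j}=a_{(n+2-i-j)\bmod n}$ and $(DA)_{i,j}=A_{n+1-i,j}=a_{(i+j-n)\bmod n}$. In both expressions the subscript depends only on $i+j$, which is exactly the property that characterises a reverse-circulant (as the entry formula from part (ii) shows). I would close by writing down the resulting first rows explicitly, namely $(a_n,a_{n-1},\dots,a_1)$ for $AD$ and a cyclic shift thereof for $DA$, both to cross-check the formula and to make the reverse-circulant structure visible.

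The main obstacle throughout is purely notational: keeping the conventions for indices, cyclic shifts, and modular representatives consistent with the earlier description in terms of $\sigma$ and $T$. Once a single index convention is fixed at the start, each of the three parts reduces to a one-line entry computation together with, in (i), the single counterexample.
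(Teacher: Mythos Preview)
Your proposal is correct. The paper itself does not supply a proof of this lemma: it is stated as a list of ``basic properties \dots\ that can easily be verified'' and left at that. Your argument---the polynomial-in-$T$ representation for commutativity, an explicit $3\times 3$ counterexample for the failure of commutativity among reverse-circulants, and the entrywise computations $R_{i,j}=a_{(i+j-1)\bmod n}$, $(AD)_{i,j}=a_{(n+2-i-j)\bmod n}$, $(DA)_{i,j}=a_{(i+j-n)\bmod n}$---fills in exactly the routine verifications the paper alludes to, and is consistent with the conventions for $\sigma$ and $T$ set up in the surrounding text.
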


Since $D^2 = I_n$, $D$ is its own inverse. This means that from (iii) of the
previous lemma we get the following corollary:

\begin{corollary}
\label{cor} Let $A$ be a reverse-circulant matrix. Then there exists unique
circulant matrices $B$ and $C$ such that $A = BD = DC$.
\end{corollary}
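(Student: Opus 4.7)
The plan is to construct $B$ and $C$ explicitly from $A$ and then verify they are circulant and unique, using the fact that $D$ is its own inverse.

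First I would set $B := AD$ and $C := DA$. Since $D^2 = I_n$, this gives immediately $BD = AD^2 = A$ and $DC = D^2 A = A$, so existence is reduced to showing that these two candidates are indeed circulant.

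The key step is the converse of Lemma \ref{circrevcirc} (iii): if $A$ is reverse-circulant, then $AD$ and $DA$ are circulant. Two routes work. The clean structural argument is to note that right-multiplication by $D$ is an $R$-linear involution on the full matrix algebra (since $D^2 = I_n$), so in particular it is a bijection. By part (iii) of the lemma, this involution maps the rank-$n$ free module of circulant matrices into the rank-$n$ free module of reverse-circulant matrices; an involution that sends a rank-$n$ submodule into another rank-$n$ submodule must swap them, so right-multiplication by $D$ carries reverse-circulants to circulants. The same argument handles left-multiplication. Alternatively, one can give a direct index calculation: a reverse-circulant matrix with first row $(a_0,a_1,\dots,a_{n-1})$ has entries $A_{ij}=a_{(i+j)\bmod n}$, and a routine computation yields $(AD)_{ij}=a_{(i-j-1)\bmod n}$, which is precisely the $(i,j)$ entry of the circulant matrix whose first row is the reversal $(a_{n-1},a_{n-2},\dots,a_0)$; an analogous computation gives the first row of $DA$ as $(a_{n-1},a_0,a_1,\dots,a_{n-2})$.

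Uniqueness is the easiest part. If $B_1 D = B_2 D$ with $B_1,B_2$ circulant, then right-multiplying by $D$ and using $D^2=I_n$ gives $B_1 = B_2$; the same argument using left multiplication handles $C$. I do not anticipate a real obstacle here — the only point requiring even mild care is confirming the converse of Lemma \ref{circrevcirc} (iii), and the involution observation dispatches it cleanly without any case analysis.
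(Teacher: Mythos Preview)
Your approach is essentially the paper's: the paper deduces the corollary in one line from $D^{2}=I_{n}$ together with Lemma~\ref{circrevcirc}(iii), and you spell out exactly that argument, setting $B=AD$, $C=DA$, and checking these are circulant (the converse of (iii)) with uniqueness following from invertibility of $D$.

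One small caution on your ``structural'' route: the claim that an involution sending a rank-$n$ free submodule into another rank-$n$ free submodule must swap them is not valid over a general commutative ring (injective $R$-linear maps between free modules of equal finite rank need not be surjective; from $f(\mathrm{Circ})\subseteq\mathrm{RevCirc}$ and $f^{2}=\mathrm{id}$ you only get $\mathrm{Circ}\subseteq f(\mathrm{RevCirc})$, not the reverse inclusion). Over a field the dimension count rescues it, but since the paper works over arbitrary commutative rings you should either note that $\{I,T,\dots,T^{n-1}\}D$ is actually a \emph{basis} of the reverse-circulants (which makes the restriction an isomorphism), or simply rely on your direct index computation, which is correct and ring-independent.
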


We prove the following important lemma using the properties above:

\begin{lemma}
\label{implem} \textbf{(i)} Circulant matrices are closed under matrix
multiplication. The product of two reverse-circulant matrices is circulant.

\textbf{(ii)} If $A$ is a circulant matrix and $B$ is a reverse-circulant
matrix, then $AB$ and $BA$ are both reverse-circulant matrices that are not
necessarily the same matrices.
\end{lemma}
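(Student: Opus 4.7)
The plan is to reduce everything to Corollary~\ref{cor}, which expresses any reverse-circulant matrix as a product of a circulant matrix and the backdiagonal matrix $D$, together with the identity $D^2=I_n$ and the commutativity of circulant matrices from Lemma~\ref{circrevcirc}(i).

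For part (i), the closure of circulant matrices under multiplication I would deduce directly from the representation $a_1 I_n + a_2 T + \cdots + a_n T^{n-1}$ given earlier: the product of two polynomials in $T$ is again a polynomial in $T$ (using $T^n=I_n$ to reduce), hence circulant. For the product of two reverse-circulant matrices $A_1$ and $A_2$, I would write $A_1 = B_1 D$ and $A_2 = D C_2$ with $B_1, C_2$ circulant via Corollary~\ref{cor}. Then
\begin{equation*}
A_1 A_2 = (B_1 D)(D C_2) = B_1 (D^2) C_2 = B_1 C_2,
\end{equation*}
which is circulant by the first half of (i).

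For part (ii), let $A$ be circulant and $B$ reverse-circulant. Using Corollary~\ref{cor} write $B = B_1 D = D C_1$ with $B_1, C_1$ circulant. Then
\begin{equation*}
AB = A(B_1 D) = (A B_1) D \quad\text{and}\quad BA = (D C_1) A = D (C_1 A).
\end{equation*}
By the already-proved closure of circulants, $A B_1$ and $C_1 A$ are circulant, so Lemma~\ref{circrevcirc}(iii) gives that both $AB$ and $BA$ are reverse-circulant. To justify that they need not coincide, I would exhibit a small concrete counterexample (e.g.\ in the $2\times 2$ or $3\times 3$ case over $\mathbb{F}_2$ or $\mathbb{Z}$), since circulant matrices do not in general commute with the backdiagonal $D$, so there is no reason to expect $A B_1 = C_1 A$.

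The argument is essentially bookkeeping built on top of Corollary~\ref{cor}; the only mildly delicate point is making sure the order $B_1 D$ vs.\ $D C_1$ is chosen on the correct side so that the two $D$'s collide and cancel in the product. Producing a clean numerical example for the ``not necessarily the same'' clause in (ii) is the one step that requires a small explicit computation rather than a structural manipulation.
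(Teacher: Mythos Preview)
Your proposal is correct and follows essentially the same route as the paper: the polynomial-in-$T$ argument for closure of circulants, then the factorizations through $D$ from Corollary~\ref{cor} and Lemma~\ref{circrevcirc}(iii) to handle all the remaining cases. The paper does not actually exhibit a counterexample for the ``not necessarily the same'' clause in~(ii), so your plan to provide one is a small addition rather than a deviation.
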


\begin{proof}
\textbf{(i)} If $A_1$ and $A_2$ are circulant matrices then they correspond
to two polynomials in $T$ with $T^n=I_n$. It is clear that their product is
also a polynomial in $T$ of degree at most $n-1$.

Now suppose that $B_1$ and $B_2$ are two reverse-circulant matrices. By
Corollary \ref{cor}, we see that there exists circulant matrices $A_1$ and $%
A_2$ such that $B_1 = A_1D$ and $B_2=DA_2$. Since $D^2=I_n$, we get
\begin{equation*}
B_1B_2 = (A_1D)(DA_2) = A_1(D^2)A_2 = A_1A_2,
\end{equation*}
which is circulant by the first part.

\textbf{(ii)} Let $C_1$ and $C_2$ be circulant matrices such that $%
B=DC_1=C_2D$ by Corollary \ref{cor}. Then $AB = A(C_2D)= (AC_2)D$, which is
reverse-circulant by Lemma \ref{circrevcirc} (iii) and the first part.
Similarly, $BA = (DC_1)A = D(C_1A)$ is reverse-circulant.
\end{proof}

Let $\lambda$ be a unit in $R$. The circulant and reverse-circulant matrices
can be extended to the so-called $\lambda$-circulant and $\lambda$%
-reverse-circulant matrices. By $\sigma_{\lambda}$, we mean the map that
acts on $R^n$ as
\begin{equation}
\sigma_{\lambda}(a_1,a_2, \dots, a_n) = (\lambda a_n, a_1, \dots, a_{n-1}).
\end{equation}
A $\lambda$-circulant matrix is then a square matrix where each row is
obtained from the previous row by applying the map $\sigma_{\lambda}$. In
other words, a typical $\lambda$-circulant matrix is of the form
\begin{equation}
\left[
\begin{array}{c}
\overline{r} \\ \hline
\sigma_{\lambda}(\overline{r}) \\ \hline
\sigma_{\lambda}^2(\overline{r}) \\ \hline
\vdots \\ \hline
\sigma_{\lambda}^{n-1}(\overline{r})%
\end{array}
\right].
\end{equation}
Let $T_{\lambda}$ be the matrix that is obtained from the permutation matrix
$T$ by multiplying the $(n,1)$-entry by $\lambda$. Then a $\lambda$%
-circulant matrix with the first row $(a_1,a_2, \dots, a_n)$ can be
expressed as
\begin{equation*}
a_1I_n+a_2T_{\lambda}+a_3T_{\lambda}^2+ \cdots + a_nT_{\lambda}^{n-1}
\end{equation*}
with $T_{\lambda}^n = \lambda I_n.$ This shows that $\lambda$-circulant
matrices are closed under multiplication and that they commute with each
other as well.

By $\rho_{\lambda}$, let us define the analogous left $\lambda$-shift on $%
R^n $, namely:
\begin{equation}
\rho_{\lambda}(a_1,a_2, \dots, a_n) = (a_2, a_3, \dots, a_n, \lambda a_1).
\end{equation}
From the definitions, we can easily observe the following identities between
$\sigma_{\lambda}$ and $\rho_{\lambda}$
\begin{equation}
\sigma_{\lambda^{-1}} \circ \rho_{\lambda} = \rho_{\lambda} \circ
\sigma_{\lambda^{-1}} = 1.
\end{equation}

A $\lambda$-reverse-circulant matrix is defined as a square matrix where
each row is obtained from the previous row by applying the map $%
\rho_{\lambda}$. In other words, a typical $\lambda$-reverse-circulant
matrix is of the form
\begin{equation}
\left[
\begin{array}{c}
\overline{r} \\ \hline
\rho_{\lambda}(\overline{r}) \\ \hline
\rho_{\lambda}^2(\overline{r}) \\ \hline
\vdots \\ \hline
\rho_{\lambda}^{n-1}(\overline{r})%
\end{array}
\right].
\end{equation}

Similar to the properties of circulant and reverse-circulant matrices we
have the following results for $\lambda$-circulant and $\lambda$%
-reverse-circulant matrices.

\begin{lemma}
\label{lm} \textbf{(i)} The product of two $\lambda$-circulant matrices is
again $\lambda$-circulant while the product of a $\lambda$-circulant and $%
\gamma$-circulant matrix is not necessarily circulant if $\gamma \neq
\lambda $.

\textbf{(ii)} $\lambda$-circulant matrices commute under multiplication for
the same $\lambda$.

\textbf{(iii)} $\lambda $-reverse-circulant matrices are symmetric.

\textbf{(iv)} With $D$ denoting the backdiagonal matrix with $1$'s on its
backdiagonal, and $A$ denoting a $\lambda$-circulant matrix, we have: $AD$
is a $\lambda$-reverse-circulant matrix while $DA$ is a $\lambda^{-1}$%
-reverse-circulant matrix.

\textbf{(v)} For any $\lambda$-reverse-circulant matrix $A$, there exists a
unique $\lambda$-circulant matrix $B$ and a unique $\lambda^{-1}$-circulant
matrix $C$ such that $A = BD = DC$.
\end{lemma}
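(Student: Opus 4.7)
The plan is to model the proof on the ideas used in Lemmas \ref{circrevcirc} and \ref{implem}, replacing the role of $T$ by $T_\lambda$ and the operators $\sigma, \rho$ by their $\lambda$-twisted versions. Parts (i) and (ii) are essentially polynomial-algebra statements, parts (iii)--(v) are bookkeeping arguments that keep careful track of where the scalar $\lambda$ moves as we multiply by $D$.

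For (i) and (ii), I would exploit the polynomial description already established before the statement: a $\lambda$-circulant matrix with first row $(a_1,\dots,a_n)$ equals $p(T_\lambda)$ for $p(x)=a_1+a_2x+\cdots+a_nx^{n-1}$, and $T_\lambda^n=\lambda I_n$. Then for any polynomials $p,q$ of degree $<n$ the product $p(T_\lambda)q(T_\lambda)$ can be reduced modulo $T_\lambda^n-\lambda I_n$ to another polynomial of degree $<n$ in $T_\lambda$, giving back a $\lambda$-circulant matrix; commutativity is immediate since polynomials in a single matrix always commute. For the second half of (i), I would simply remark that $T_\lambda T_\gamma$ already fails to be $\lambda$- or $\gamma$-circulant in general, providing a small $2\times 2$ counterexample over $\F_2$-like rings to justify the claim.

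For (iii), I would compute the $(i,j)$ entry of a $\lambda$-reverse circulant matrix directly from the definition of $\rho_\lambda$. Working $0$-indexed, iterating $\rho_\lambda$ gives: the $i$-th row has $a_{i+j+1}$ in column $j$ when $i+j\le n-1$, and $\lambda a_{i+j+1-n}$ when $i+j\ge n$. Since both conditions and the resulting formulas are symmetric in $i$ and $j$, symmetry follows. Part (iv) is a direct verification: $AD$ reverses each row of $A$ and $DA$ reverses the order of rows of $A$, so writing out the first two rows in each case using $\sigma_\lambda^k(\bar{r})$ and comparing them to $\rho_\lambda$ or $\rho_{\lambda^{-1}}$ iterates of the first row gives the claim. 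The only delicate point is that when rows of $A$ are reversed, the trailing $\lambda$-scaled entries of $\sigma_\lambda^k(\bar{r})$ land in the leading positions, and this is exactly what produces a $\lambda^{-1}$-twist in $DA$ but preserves a $\lambda$-twist in $AD$.

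Finally for (v), since $D^2=I_n$, the equations $A=BD$ and $A=DC$ force $B=AD$ and $C=DA$ uniquely, so only existence of the right type must be checked. Existence is the converse of (iv): starting from a $\lambda$-reverse circulant $A$ with first row $(a_1,\dots,a_n)$, I would verify directly that the rows of $AD$ satisfy the $\sigma_\lambda$-recursion (so $AD$ is $\lambda$-circulant) and the rows of $DA$ satisfy the $\sigma_{\lambda^{-1}}$-recursion (so $DA$ is $\lambda^{-1}$-circulant), by comparing row $k+1$ with $\sigma_\lambda$ (resp.\ $\sigma_{\lambda^{-1}}$) applied to row $k$. The main obstacle throughout is nothing conceptual but rather index bookkeeping: one must track precisely where the factor $\lambda$ appears after each application of $\rho_\lambda$ or $\sigma_\lambda$ and after reversal by $D$, which is where a sign-like error in the exponent of $\lambda$ could easily occur; once the first two rows are matched correctly in each case, the remaining rows follow by induction on the iterate.
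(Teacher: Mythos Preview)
The paper does not actually supply a proof of Lemma~\ref{lm}; it simply states the lemma as the $\lambda$-analogue of Lemma~\ref{circrevcirc} and Corollary~\ref{cor}, having already remarked just before the statement that the polynomial description $a_1I_n+a_2T_\lambda+\cdots+a_nT_\lambda^{n-1}$ with $T_\lambda^n=\lambda I_n$ ``shows that $\lambda$-circulant matrices are closed under multiplication and that they commute with each other as well.'' Your proposal is correct and is precisely the argument the paper's surrounding text invites: parts (i)--(ii) via the polynomial algebra in $T_\lambda$, and parts (iii)--(v) via direct entrywise bookkeeping with $\sigma_\lambda$, $\rho_\lambda$, and $D$, together with $D^2=I_n$ for uniqueness in (v). One very minor remark: for the counterexample in the second clause of (i) you will need a ring with at least two distinct units, so an $\F_2$ example will not do; take for instance $R_1=\F_2+u\F_2$ with $\lambda=1$, $\gamma=1+u$.
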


We prove the following result, which is analogous to Lemma \ref{implem}:

\begin{theorem}
\label{thm} Suppose $A$ is a $\lambda $-circulant matrix and $B$ be a $%
\lambda $-reverse-circulant matrix. Then $AB$ is a $\lambda $%
-reverse-circulant matrix for all units $\lambda \in R$. If $\lambda ^{2}=1$%
, then $BA$ is also $\lambda $-reverse-circulant.
\end{theorem}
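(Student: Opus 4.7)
The plan is to mimic the strategy used in the proof of Lemma \ref{implem} (ii), exploiting the decomposition of a $\lambda$-reverse-circulant matrix as $BD$ or $DC$ (for suitable circulant factors) given in Lemma \ref{lm} (v), and then using parts (i) and (iv) of the same lemma to absorb the multiplication by $D$.

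For the first claim, I would invoke Lemma \ref{lm} (v) to write $B = B'D$, where $B'$ is a $\lambda$-circulant matrix. Then
\begin{equation*}
AB = A(B'D) = (AB')D.
\end{equation*}
Since $A$ and $B'$ are both $\lambda$-circulant, Lemma \ref{lm} (i) tells us that $AB'$ is again $\lambda$-circulant. Applying Lemma \ref{lm} (iv) to $AB'$, we conclude that $(AB')D$ is $\lambda$-reverse-circulant, which gives the desired conclusion for $AB$.

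For the second claim I would instead use the other half of Lemma \ref{lm} (v) and write $B = DC$, where $C$ is $\lambda^{-1}$-circulant. The hypothesis $\lambda^{2}=1$ is precisely what makes $\lambda^{-1} = \lambda$, so $C$ is in fact $\lambda$-circulant and can be multiplied with $A$ while staying in the $\lambda$-circulant class. Then
\begin{equation*}
BA = (DC)A = D(CA),
\end{equation*}
and $CA$ is $\lambda$-circulant by Lemma \ref{lm} (i). By Lemma \ref{lm} (iv), $D(CA)$ is $\lambda^{-1}$-reverse-circulant, and $\lambda^{-1}=\lambda$ means this is $\lambda$-reverse-circulant, as required.

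The only subtle point, and the one thing worth flagging, is the asymmetry between the two statements: for $AB$ the product is $\lambda$-reverse-circulant unconditionally, because multiplying $A$ on the left of $B = B'D$ keeps $D$ on the right (where it produces a $\lambda$-reverse-circulant); but for $BA$ the $D$ ends up on the left and produces a $\lambda^{-1}$-reverse-circulant matrix, which only coincides with a $\lambda$-reverse-circulant matrix when $\lambda^2 = 1$. So the hypothesis $\lambda^2=1$ is not a technical convenience, it is precisely where the asymmetry forces itself on us, and it is the one step in the argument that I would double-check carefully.
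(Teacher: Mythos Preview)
Your proof is correct and follows essentially the same approach as the paper: both arguments decompose $B$ via Lemma~\ref{lm}(v) as $B'D$ (respectively $DC$), use Lemma~\ref{lm}(i) to keep the circulant product in the $\lambda$-circulant class, and then apply Lemma~\ref{lm}(iv) to conclude. Your closing remark on why the hypothesis $\lambda^{2}=1$ is needed precisely matches the paper's reasoning.
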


\begin{proof}
By Lemma \ref{lm}-(v), let $C_1$ be the $\lambda$-circulant matrix such that
$B = C_1D$. Then $AB = A(C_1D) = (AC_1)D$. Since both $A$ and $C_1$ are $%
\lambda$-circulant, $AC_1$ is $\lambda$-circulant, and so by Lemma \ref{lm}%
-(iv), $(AC_1)D$ is $\lambda$-reverse-circulant.

For the second part, assume that $\lambda ^{2}=1$, $\lambda ^{-1}=\lambda $.
Again by the previous lemma, let $C_{2}$ be the $\lambda ^{-1}=\lambda $%
-circulant matrix such that $B=DC_{2}$. Then $BA=(DC_{2})A=D(C_{2}A)$.
Similar to the previous case, $C_{2}A$ is $\lambda $-circulant, and so by
the same lemma $BA=D(C_{2}A)$ is $\lambda ^{-1}=\lambda $-reverse-circulant.
\end{proof}

\section{Constructions\label{constructions}}

The four circulant construction were introduced in \cite{betsumiya} as;

Let $A$ and $B$ be $n\times n$ circulant matrices over $\mathbb{F}_{p}$ such
that $AA^{T}+BB^{T}=-I_{n}$ then the matrix%
\begin{equation*}
G=\left[ I_{2n\ }%
\begin{array}{|cc}
A & B \\
-B^{T} & A^{T}%
\end{array}%
\right]
\end{equation*}%
generates a self-dual code over $\mathbb{F}_{p}$.

Recently, the construction is applied on $\mathbb{F}_{2}+u\mathbb{F}_{2}$ in
\cite{karadenizfc}, which yielded to new binary self-dual codes.

In the following we give a modified version of the construction, which works
for any commutative Frobenius ring;

\begin{theorem}
\label{twoblock}Let $\lambda $ be a unit of the commutative Frobenius ring $%
R $, $A$ be a $\lambda $-circulant matrix and $B$ be a $\lambda $%
-reverse-circulant matrix with $AA^{T}+BB^{T}=-I_{n}$ then the matrix
\begin{equation*}
G=\left[ I_{2n\ }%
\begin{array}{|cc}
A & B \\
-B & A%
\end{array}%
\right]
\end{equation*}%
generates a self-dual code $\mathcal{C}$ over $R$.
\end{theorem}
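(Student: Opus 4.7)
The plan is to show that $G$ generates a self-dual code by verifying that $GG^{T}=0$ and invoking a standard cardinality argument for codes over Frobenius rings. Writing
\[
G = [\,I_{2n} \mid M\,], \qquad M = \begin{pmatrix} A & B \\ -B & A \end{pmatrix},
\]
the matrix $G$ is $2n \times 4n$ and its rows generate a free $R$-module of rank $2n$, so $|\mathcal{C}| = |R|^{2n}$. Since $R$ is Frobenius, we have $|\mathcal{C}| \cdot |\mathcal{C}^{\perp}| = |R|^{4n}$, and hence $|\mathcal{C}| = |\mathcal{C}^{\perp}|$. It therefore suffices to prove self-orthogonality, i.e.\ $GG^{T} = 0$, which is equivalent to $MM^{T} = -I_{2n}$.

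Next I would compute $MM^{T}$ as a $2\times 2$ block matrix. A direct calculation gives
\[
MM^{T} = \begin{pmatrix} AA^{T}+BB^{T} & -AB^{T}+BA^{T} \\ -BA^{T}+AB^{T} & BB^{T}+AA^{T} \end{pmatrix}.
\]
The two diagonal blocks equal $AA^{T}+BB^{T} = -I_{n}$ by the hypothesis, so the diagonal of $MM^{T}$ is exactly $-I_{2n}$ as required. The whole proof therefore reduces to showing that the off-diagonal blocks vanish, i.e.\ $AB^{T} = BA^{T}$.

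This is the step that will require the structural results from Section~2. Since $B$ is $\lambda$-reverse-circulant, Lemma~\ref{lm}-(iii) gives $B^{T}=B$, so it is enough to prove $AB = BA^{T}$. By Theorem~\ref{thm}, the product $AB$ of the $\lambda$-circulant matrix $A$ with the $\lambda$-reverse-circulant matrix $B$ is itself a $\lambda$-reverse-circulant matrix, and by Lemma~\ref{lm}-(iii) every such matrix is symmetric. Therefore
\[
AB = (AB)^{T} = B^{T}A^{T} = BA^{T},
\]
which is exactly what we needed. Assembling these pieces, $MM^{T} = -I_{2n}$, hence $GG^{T} = I_{2n} + MM^{T} = 0$, so $\mathcal{C}$ is self-orthogonal, and by the cardinality argument $\mathcal{C}$ is self-dual.

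The main subtlety is realizing that the off-diagonal blocks vanish for \emph{any} unit $\lambda$, not only when $\lambda^{2}=1$; the second half of Theorem~\ref{thm} (which does need $\lambda^{2}=1$) is not required here because only the symmetry of $AB$, not of $BA$, enters the computation. Everything else is routine block matrix algebra, and the Frobenius-ring hypothesis is used solely to upgrade self-orthogonality to self-duality via the equality $|\mathcal{C}|=|\mathcal{C}^{\perp}|$.
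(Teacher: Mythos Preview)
Your proof is correct and follows essentially the same route as the paper's own argument: reduce to showing $MM^{T}=-I_{2n}$, handle the diagonal blocks via the hypothesis $AA^{T}+BB^{T}=-I_{n}$, and kill the off-diagonal blocks by combining $B^{T}=B$ (Lemma~\ref{lm}-(iii)) with the fact that $AB$ is $\lambda$-reverse-circulant and hence symmetric (Theorem~\ref{thm} and Lemma~\ref{lm}-(iii)). Your added explanation of the Frobenius cardinality step and your remark that only the first half of Theorem~\ref{thm} is needed are welcome clarifications, but the underlying strategy is identical to the paper's.
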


\begin{proof}
Let $X=\left[
\begin{array}{cc}
A & B \\
-B & A%
\end{array}%
\right] $, it is enough to show that $XX^{T}=-I_{2n}$. We have%
\begin{equation*}
XX^{T}=\left[
\begin{array}{cc}
-I_{n} & -AB^{T}+BA^{T} \\
-BA^{T}+AB^{T} & -I_{n}%
\end{array}%
\right] .
\end{equation*}%
So, we need to verify that $-AB^{T}+BA^{T}=0$. By Lemma \ref{lm}-(iii) $B$
is a symmetric matrix so $AB^{T}=AB$ and by Theorem \ref{thm} $AB$ is a $%
\lambda $-reverse-circulant matrix and it is also symmetric by Lemma \ref{lm}%
-(iii). It follows that $AB^{T}=BA^{T}$, which implies $GG^{T}=0$. Since the
ring $R$ is Frobenius, the code $\mathcal{C}$ is self-dual.
\end{proof}

By applying Theorem \ref{twoblock} on $\mathbb{F}_{2}$ we obtain the codes
listed in Table \ref{tab:64binary}. Some examples of codes over $R_{2}$ are
given in Table \ref{tab:R2}.

For the rest of the manuscript let $\boldsymbol{J}_{n}$ denote the $n\times
n $ matrix with all entries equal to $1$.

\begin{lemma}
\label{rowsum}Let $A$ be a $\lambda $-circulant matrix over $R_{k}$, $B$ be
a reverse-circulant matrix and $r_{A}=\left( a_{1},a_{2},\ldots
,a_{n}\right) $ and $r_{B}=\left( b_{1},b_{2},\ldots ,b_{n}\right) $ denote
the first rows of $A$ and $B$, respectively. Let $S_{r_{A}}$ and $S_{r_{B}}$
denote the sum of the components of $r_{A}$ and $r_{B}$, respectively, i.e.$%
S_{r_{A}}=\sum_{i=1}^{n}a_{i}$. Then, $AA^{T}+BB^{T}=I_{n}+\boldsymbol{J}%
_{n} $ implies $S_{r_{A}}$ and $S_{r_{B}}$ are both units or both non-units
in $R_{k}$.
\end{lemma}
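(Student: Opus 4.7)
The plan is to exploit the local structure of $R_k$ to reduce the statement to a short row-sum computation in the residue field $\mathbb{F}_2$. Recall that $R_k$ is a local ring with maximal ideal $\mathfrak{m}=(u_1,\dots,u_k)$ and residue field $\mathbb{F}_2$, so an element of $R_k$ is a unit if and only if its image modulo $\mathfrak{m}$ equals $1$. Writing $\overline{\,\cdot\,}$ for this reduction, the assertion that $S_{r_A}$ and $S_{r_B}$ are both units or both non-units is equivalent to the single scalar equality $\overline{S_{r_A}}=\overline{S_{r_B}}$ in $\mathbb{F}_2$.

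The key observation is that because $\lambda$ is a unit, $\bar\lambda=1$, so the reduced matrix $\bar A$ is an \emph{ordinary} circulant matrix over $\mathbb{F}_2$, while $\bar B$ remains reverse-circulant. Both ordinary circulant and reverse-circulant matrices over $\mathbb{F}_2$ have the property that every row sum equals the sum of the entries in the first row (and by symmetry from Lemma \ref{circrevcirc}-(ii), or by transposition, so does every column sum). Consequently $\bar A\mathbf{1}=\bar A^T\mathbf{1}=\overline{S_{r_A}}\,\mathbf{1}$ and $\bar B\mathbf{1}=\bar B^T\mathbf{1}=\overline{S_{r_B}}\,\mathbf{1}$, where $\mathbf{1}$ is the all-ones column vector.

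I would then reduce the hypothesis $AA^T+BB^T=I_n+\boldsymbol{J}_n$ modulo $\mathfrak{m}$ and apply both sides to $\mathbf{1}$. The left side collapses to
\begin{equation*}
\overline{S_{r_A}}\,\bar A\mathbf{1}+\overline{S_{r_B}}\,\bar B\mathbf{1}=\bigl(\overline{S_{r_A}}^{\,2}+\overline{S_{r_B}}^{\,2}\bigr)\mathbf{1},
\end{equation*}
while the right side yields $(n+1)\mathbf{1}$ (each diagonal entry of $I_n+\boldsymbol{J}_n$ vanishes in characteristic $2$, leaving $n-1\equiv n+1$ ones per row). Since $x^2=x$ in $\mathbb{F}_2$, this reduces to the scalar identity $\overline{S_{r_A}}+\overline{S_{r_B}}\equiv n+1\pmod 2$, and for the odd $n$ relevant to the intended bordered construction one concludes $\overline{S_{r_A}}=\overline{S_{r_B}}$. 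The main obstacle is spotting the trick of evaluating the matrix equation on the all-ones vector, which distils a complicated matrix identity into a single scalar relation between the two row sums; a secondary subtlety is the implicit parity requirement, since for even $n$ the identical computation would instead force $\overline{S_{r_A}}\neq\overline{S_{r_B}}$.
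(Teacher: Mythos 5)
Your argument is valid for odd $n$ but follows a genuinely different route from the paper's. The paper never leaves $R_k$ and never touches the all-ones vector: it simply reads off the $(1,1)$ entry of the hypothesis, getting $\langle r_A,r_A\rangle+\langle r_B,r_B\rangle=(I_n+\boldsymbol{J}_n)_{11}=1+1=0$, then uses the characteristic-$2$ identity $\sum a_i^2=(S_{r_A})^2$ to conclude $(S_{r_A})^2=(S_{r_B})^2$, and finishes with the observation that in $R_k$ a unit squares to $1$ and a non-unit squares to $0$. Your reduction modulo $\mathfrak{m}=(u_1,\dots,u_k)$ followed by evaluation at $\mathbf{1}$ is a clean alternative that makes the role of the residue field explicit, but it extracts the \emph{row sum} of the matrix identity rather than a diagonal entry, and this is where the two proofs part ways: the row sum of $I_n+\boldsymbol{J}_n$ is $n+1$, which depends on the parity of $n$, whereas the diagonal entry is $0$ unconditionally. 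The lemma as stated carries no parity hypothesis, so strictly speaking your proof has a gap at even $n$: there your computation yields $\overline{S_{r_A}}\neq\overline{S_{r_B}}$, the negation of the desired conclusion. That case is in fact vacuous --- combining your row-sum relation with the diagonal relation $\overline{S_{r_A}}^{\,2}+\overline{S_{r_B}}^{\,2}=0$ shows that $AA^{T}+BB^{T}=I_n+\boldsymbol{J}_n$ is unsatisfiable for even $n$ --- but closing it requires importing exactly the diagonal computation that constitutes the paper's entire proof. Since the lemma is only invoked in Theorem \ref{bordered}, where $n$ is odd, the restriction is harmless in context; still, the paper's entrywise argument is both shorter and parity-free, which is what you give up in exchange for the more conceptual local-ring picture.
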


\begin{proof}
$AA^{T}+BB^{T}=I_{n}+\boldsymbol{J}_{n}$ implies $\left\langle
r_{A},r_{A}\right\rangle +\left\langle r_{B},r_{B}\right\rangle =0$, i.e. $%
\sum_{i=1}^{n}a_{i}^{2}+\sum_{i=1}^{n}b_{i}^{2}=0$. Then, $%
\sum_{i=1}^{n}a_{i}^{2}=\left( S_{r_{A}}\right) ^{2}$ and $%
\sum_{i=1}^{n}b_{i}^{2}=\left( S_{r_{B}}\right) ^{2}$ since $char\left(
R_{k}\right) =2$. Hence, $\left( S_{r_{A}}\right) ^{2}=\left(
S_{r_{B}}\right) ^{2}$ and the result follows from the fact (unit)$^{2}=1$,
(non-unit)$^{2}=0$ for elements of $R_{k}$.
\end{proof}

The following Theorem is a bordered version of the construction in Theorem %
\ref{twoblock}, which generates self-dual codes over the family of rings $%
R_{k}$.

\begin{theorem}
\label{bordered}Let $n$ be an odd number, $A$ be a circulant matrix or order
$n$ and $B$ be a reverse-circulant matrix of order $n$ with $%
S_{r_{A}}=S_{r_{B}}$, which is a unit and $AA^{T}+BB^{T}=I_{n}+\boldsymbol{J}%
_{n}$. Let $x$ be a unit and $y$ be a non-unit in $R_{k}$. Let $z=xS_{r_{A}}$
and $t=yS_{r_{B}}$ then the matrix
\begin{equation*}
G=\left[ I_{2n+2\ }%
\begin{array}{|cccc}
1 & 1 & \boldsymbol{x} & \boldsymbol{y} \\
1 & 1 & \boldsymbol{y} & \boldsymbol{x} \\
\boldsymbol{z}^{T} & \boldsymbol{t}^{T} & A & B \\
\boldsymbol{t}^{T} & \boldsymbol{z}^{T} & B & A%
\end{array}%
\right]
\end{equation*}%
generates a self-dual code $\mathcal{C}$ over $R_{k}$ where $\boldsymbol{x}%
=\left( x,x,\ldots ,x\right) $ denotes the all-$x$ vector, similarly for $%
\boldsymbol{y}$, $\boldsymbol{z}$ and $\boldsymbol{t}$.
\end{theorem}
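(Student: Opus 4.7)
The plan is to reduce self-duality of $\mathcal{C}$ to the single matrix identity $MM^{T} = I_{2n+2}$, where $M$ denotes the right-hand block of $G$; since $R_k$ has characteristic $2$ and is Frobenius, this implies $GG^{T} = 0$ and hence $\mathcal{C}$ is self-dual by the same counting argument used at the end of Theorem \ref{twoblock}. I would partition $M$ along the border as a $2 \times 2$ block matrix: $P$ the $2\times 2$ all-ones block, $Q$ the $2 \times 2n$ top-right cross block, $R$ the $2n \times 2$ bottom-left cross block, and $S$ the $2n \times 2n$ main block built from $A$ and $B$. Then $MM^{T}$ splits into four block computations, each handled separately.

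For the top-left block $PP^{T}+QQ^{T}$, the calculation collapses to $n(x^{2}+y^{2})I_{2}$, and the hypotheses that $n$ is odd, $x$ is a unit in $R_k$ (so $x^{2}=1$) and $y$ is a non-unit (so $y^{2}=0$) make this exactly $I_{2}$. For the cross blocks the key ingredient is that row sums of circulant and reverse-circulant matrices are constant: $\boldsymbol{1}A = S_{r_A}\boldsymbol{1}$, $\boldsymbol{1}B = S_{r_B}\boldsymbol{1}$, and likewise for $A^{T}$ and $B^{T}=B$. Combined with $z = xS_{r_A}$, $t = yS_{r_B}$ and the assumption $S_{r_A}=S_{r_B}$, each of $\mathbf{x}A^{T}$, $\mathbf{y}B$, $\mathbf{x}B$, $\mathbf{y}A^{T}$ reduces to $\mathbf{z}$ or $\mathbf{t}$, and the contributions from $PR^{T}$ and $QS^{T}$ cancel in characteristic $2$; the bottom-left block then follows by transposition.

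The substantive work is in the $2n\times 2n$ bottom-right block $RR^{T}+SS^{T}$. The $n \times n$ diagonal sub-blocks of $SS^{T}$ give $AA^{T}+B^{2} = I_n + \boldsymbol{J}_n$ directly from the hypothesis, while the off-diagonal sub-blocks are $AB+BA^{T}$, and this is the main obstacle. I would reuse the symmetry argument from the proof of Theorem \ref{twoblock}: by Lemma \ref{implem}(ii) the product $AB$ is reverse-circulant, hence symmetric by Lemma \ref{circrevcirc}(ii); but also $(AB)^{T} = B^{T}A^{T} = BA^{T}$ using $B = B^{T}$. Therefore $AB = BA^{T}$, and their sum vanishes in characteristic $2$.

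Finally $RR^{T}$ contributes $(z^{2}+t^{2})\boldsymbol{J}_n$ on its diagonal sub-blocks and $0$ off-diagonal; since $z$ is the product of the two units $x$ and $S_{r_A}$ we get $z^{2}=1$, while $t$ is a non-unit so $t^{2}=0$. This produces exactly the $\boldsymbol{J}_n$ that cancels the extra $\boldsymbol{J}_n$ coming from the hypothesis $AA^{T}+BB^{T}=I_n+\boldsymbol{J}_n$, leaving $I_n$ on the diagonals. Assembling the four blocks gives $MM^{T}=I_{2n+2}$ and completes the proof.
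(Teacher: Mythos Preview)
Your proposal is correct and follows essentially the same approach as the paper: both verify self-orthogonality by checking that the right-hand block $M$ satisfies $MM^{T}=I_{2n+2}$, using the same key facts ($x^{2}=1$, $y^{2}=0$, $z^{2}=1$, $t^{2}=0$ in $R_{k}$, constancy of row sums, and the symmetry of $AB$ via Lemmas~\ref{circrevcirc}(ii) and~\ref{implem}(ii)). The only difference is organizational---the paper phrases the verification as pairwise inner products $\langle g_i,g_j\rangle$ of rows, whereas you compute $MM^{T}$ via a $2\times 2$ block partition of $M$---but the underlying computations are identical.
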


\begin{proof}
Let $g_{i}$ denote the $i$-th row of $G$. $\left\langle
g_{1},g_{1}\right\rangle =1+n\left( x^{2}+y^{2}\right) =0$ since $n$ is odd,
$x^{2}=1$ and $y^{2}=0$. It is obvious that $\left\langle
g_{1},g_{2}\right\rangle =0$. By the choice of $z$ and $\ t$ we have $%
\left\langle g_{1},g_{i}\right\rangle =0$ for $3\leq i\leq 2n+2$. Similarly $%
g_{2}$ is orthogonal to itself and the other rows. Since $z$ is a unit and $%
t $ is a non-unit we have $z^{2}=1$ and $t^{2}=0$ which implies $%
\left\langle g_{i},g_{i}\right\rangle =1+z^{2}+t^{2}=0$ for $3\leq i\leq
2n+2.$ $AA^{T}+BB^{T}=I_{n}+\boldsymbol{J}_{n}$ and $z^{2}+t^{2}=1$ implies $%
\left\langle g_{i},g_{j}\right\rangle =0$ whenever $3\leq i<j\leq n+2$ or $%
n+3\leq i<j\leq 2n+2$. By Lemma \ref{implem}-(ii) $AB$ is a
reverse-circulant matrix and by Lemma \ref{circrevcirc}-(ii)
reverse-circulant matrices are symmetric which implies $B=B^{T}$ and
therefore $AB^{T}+BA^{T}=0$. Together with $zt+tz=0$ this implies $%
\left\langle g_{i},g_{j}\right\rangle =0$ whenever $3\leq i\leq n+2$ or $%
n+3\leq j\leq 2n+2$. Hence, $\mathcal{C}$ is a self-dual code over $R_{k}$.
\end{proof}

\begin{example}
Let $\mathcal{C}$ be the code over $R_{1}$ generated by
\begin{equation*}
\left[ I_{16\ }%
\begin{array}{|cccc}
1 & 1 & \boldsymbol{1} & \boldsymbol{u} \\
1 & 1 & \boldsymbol{u} & \boldsymbol{1} \\
\left( \boldsymbol{1+u}\right) ^{T} & \boldsymbol{u}^{T} & A & B \\
\boldsymbol{u}^{T} & \left( \boldsymbol{1+u}\right) ^{T} & B & A%
\end{array}%
\right]
\end{equation*}%
where $r_{A}=\left( u,0,1,1,u,1,u\right) $ and $r_{B}=\left(
0,0,0,1,u,u,u\right) $ then $S_{r_{A}}=S_{r_{B}}=1+u$ and $%
AA^{T}+BB^{T}=I_{7}+\boldsymbol{J}_{7}$. Thus, $\mathcal{C}$ is a self-dual
code by Theorem \ref{bordered}. The binary image $\phi _{1}\left( \mathcal{C}%
\right) $ is a self-dual Type II $\left[ 64,32,12\right] _{2}$-code.
\end{example}

The codes listed in Table \ref{tab:bordered} are obtained by applying
Theorem \ref{bordered} on $\mathbb{F}_{2}$.

\section{New binary self-dual codes of lengths 64, 66 and 68\label{newcodes}}

In this section, we apply the constructions given in Section \ref%
{constructions}. By using Theorem \ref{twoblock} on $\mathbb{F}_{2}$ we were
able to construct 4 new codes of length 68. Three new extremal binary
self-dual codes of length 64 are constructed as an application of Theorem %
\ref{bordered} on $\mathbb{F}_{2}$. Moreover, by considering the extensions
of the codes new codes of length 66 and 68 are constructed. The following
theorem is used for extensions:

\begin{theorem}
$($\cite{doughertyfrobenius}$)$ \label{ext}Let $\mathcal{C}$ be a self-dual
code over $R$ of length $n$ and $G=(r_{i})$ be a $k\times n$ generator
matrix for $\mathcal{C}$, where $r_{i}$ is the $i$-th row of $G$, $1\leq
i\leq k$. Let $c$ be a unit in $R$ such that $c^{2}=1$ and $X$ be a vector
in $R^{n}$ with $\left\langle X,X\right\rangle =1$. Let $y_{i}=\left\langle
r_{i},X\right\rangle $ for $1\leq i\leq k$. Then the following matrix%
\begin{equation*}
\left(
\begin{array}{cc|c}
1 & 0 & X \\ \hline
y_{1} & cy_{1} & r_{1} \\
\vdots & \vdots & \vdots \\
y_{k} & cy_{k} & r_{k}%
\end{array}%
\right) ,
\end{equation*}%
generates a self-dual code $\mathcal{C}^{\prime }$ over $R$ of length $n+2$.
\end{theorem}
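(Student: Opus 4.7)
The plan is to show the extended matrix generates a self-orthogonal code of the correct cardinality, and then invoke the Frobenius-ring duality identity $|\mathcal{C}'|\cdot|(\mathcal{C}')^{\perp}|=|R|^{n+2}$ to upgrade self-orthogonality to self-duality. Since the rings of interest in this paper all have characteristic $2$, the computations below are done in that setting, where $1=-1$ and $a+a=0$.

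First I would verify self-orthogonality by running through all four kinds of pairwise inner products of the rows. Denote the top row by $r_0=(1,0,X)$ and the other rows by $R_i=(y_i,cy_i,r_i)$. The inner product $\langle r_0,r_0\rangle = 1+0+\langle X,X\rangle = 1+1=0$ uses the hypothesis $\langle X,X\rangle=1$ together with characteristic $2$. For $i\geq 1$,
\begin{equation*}
\langle R_i,R_i\rangle = y_i^2+c^2 y_i^2+\langle r_i,r_i\rangle = (1+c^2)y_i^2+0 = 0,
\end{equation*}
using $c^2=1$ and the self-orthogonality of $\mathcal{C}$. For the cross term $\langle r_0,R_i\rangle = y_i+0+\langle X,r_i\rangle = y_i+y_i=0$, which is exactly where the definition $y_i=\langle r_i,X\rangle$ is needed. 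Finally, for $1\leq i<j\leq k$,
\begin{equation*}
\langle R_i,R_j\rangle = y_iy_j+c^2 y_iy_j+\langle r_i,r_j\rangle = (1+c^2)y_iy_j = 0.
\end{equation*}
So every pair of generator rows is orthogonal, giving $\mathcal{C}'\subseteq(\mathcal{C}')^{\perp}$.

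The remaining task is a cardinality check: I need $|\mathcal{C}'|=|R|^{(n+2)/2}$. Since $\mathcal{C}$ is self-dual over the Frobenius ring $R$, we have $|\mathcal{C}|=|R|^{n/2}$. Projection of $\mathcal{C}'$ onto its last $n$ coordinates maps the rows $R_i$ onto a spanning set of $\mathcal{C}$, while the extra row $r_0$ projects to $X$. The kernel of this projection in $\mathcal{C}'$ consists of $R$-combinations of the generators whose last $n$ coordinates vanish; because the $r_i$ are a generating set for $\mathcal{C}$, such a combination must use $r_0$ with a coefficient $\alpha$ and the $R_i$ with coefficients $\beta_i$ satisfying $\alpha X+\sum\beta_i r_i=0$. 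But the leading coordinate $\alpha$ is unconstrained in producing the $1$ at position $1$, and it contributes a factor of $|R|$ to the cardinality beyond the $|R|^{n/2}$ coming from $\mathcal{C}$. Thus $|\mathcal{C}'|\geq |R|\cdot|\mathcal{C}|=|R|^{(n+2)/2}$. Combined with the Frobenius identity $|\mathcal{C}'|\cdot|(\mathcal{C}')^{\perp}|=|R|^{n+2}$ and the inclusion $\mathcal{C}'\subseteq(\mathcal{C}')^{\perp}$, equality forces $\mathcal{C}'=(\mathcal{C}')^{\perp}$.

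The main obstacle is the cardinality step: over a non-chain Frobenius ring like $R_k$ for $k\geq 2$, one cannot simply count rows and read off a dimension, so some care is needed to argue that the new row $r_0$ really contributes an independent factor of $|R|$ and that no degeneracies collapse $\mathcal{C}'$. The point that rescues the argument is that the first coordinate of the generator matrix contains a unit $1$ in the extra row and zeros in every other row, making the projection to the first coordinate an $R$-linear surjection whose kernel is the natural embedding of $\mathcal{C}$ into the last $n$ coordinates of $R^{n+2}$; counting with this exact sequence gives $|\mathcal{C}'|=|R|\cdot|\mathcal{C}|$ as required.
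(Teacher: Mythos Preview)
The paper does not prove this theorem at all: it is quoted from \cite{doughertyfrobenius} and used as a black box, so there is no in-paper argument to compare against. Your self-orthogonality check is correct and is exactly the routine verification one would expect.

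Your cardinality argument, however, contains a genuine error. You write that ``the first coordinate of the generator matrix contains a unit $1$ in the extra row and zeros in every other row.'' That is false: the first column of the displayed matrix is $(1,y_1,\dots,y_k)^{T}$, and the $y_i=\langle r_i,X\rangle$ are typically nonzero. Consequently your description of the kernel of projection to the first coordinate as ``the natural embedding of $\mathcal{C}$ into the last $n$ coordinates'' is not right either: an element of that kernel has second coordinate $c\langle v,X\rangle$ and last $n$ coordinates $v-\langle v,X\rangle X$, not $0$ and $v$.

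The fix is short. Observe that for any $v=\sum_i\beta_i r_i\in\mathcal{C}$ the element $\sum_i\beta_i R_i=(\langle v,X\rangle,\,c\langle v,X\rangle,\,v)$ depends only on $v$, because $\sum_i\beta_i y_i=\langle v,X\rangle$. Hence the map
\[
\psi:\;R\times\mathcal{C}\longrightarrow\mathcal{C}',\qquad (\alpha,v)\longmapsto \alpha r_0+(\langle v,X\rangle,\,c\langle v,X\rangle,\,v)
\]
is well defined and clearly surjective. It is injective since from $\psi(\alpha,v)=\psi(\alpha',v')$ the second coordinate gives $\langle v,X\rangle=\langle v',X\rangle$ (as $c$ is a unit), the first then gives $\alpha=\alpha'$, and the last $n$ coordinates give $v=v'$. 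Thus $|\mathcal{C}'|=|R|\cdot|\mathcal{C}|=|R|^{(n+2)/2}$, and the Frobenius identity together with $\mathcal{C}'\subseteq(\mathcal{C}')^{\perp}$ finishes the proof. Equivalently, you may first perform the row operations $R_i\mapsto R_i-y_i r_0$ to genuinely produce zeros below the leading $1$, and then your exact-sequence description becomes correct.
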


\subsection{New extremal binary self-dual codes of length 64}

There are two possibilities for the weight enumerators of extremal
singly-even $\left[ 64,32,12\right] _{2}$ codes (\cite{conway}):
\begin{eqnarray*}
W_{64,1} &=&1+\left( 1312+16\beta \right) y^{12}+\left( 22016-64\beta
\right) y^{14}+\cdots ,\text{ }14\leq \beta \leq 284, \\
W_{64,2} &=&1+\left( 1312+16\beta \right) y^{12}+\left( 23040-64\beta
\right) y^{14}+\cdots ,\text{ }0\leq \beta \leq 277.
\end{eqnarray*}%
Together with the ones constructed in \cite{karadeniz64,karadenizfc,yankov},
codes exist with weight enumerators $\beta =$14, 18, 22, 25, 32, 36, 39, 44,
46, 53, 60, 64 in $W_{64,1}$ and for $\beta =$0, 1, 2, 4, 5,\ 6, 8, 9, 10,
12, 13,\ 14, 16,\ 17, 18, 20, 21,\ 22, 23, 24,\ 25,\ 28,\ 29,\ 30, 32,\ 33,\
36, 37, 38, 40,\ 41,\ 44, 48, 51,\ 52,\ 56, 58, 64, 72, 80,\ 88,\ 96, 104,
108,\ 112,\ 114,\ 118,\ 120 and 184 in $W_{64,2}$.

In this work, we construct the codes with weight enumerators $\beta =$29,
59, 74 in $W_{64,1}$.

By Theorem \ref{twoblock} we obtain extremal binary self-dual codes of
length 64 that are given in Table \ref{tab:64binary}.
\begin{table}[tbp]
\caption{Extremal binary self-dual Type I codes of length 64 by Theorem\
\protect\ref{twoblock} on $\mathbb{F}_{2}$. }
\label{tab:64binary}\centering
\begin{tabular}{||c|c|c|c||c||}
\hline
$\mathcal{B}_{64,i}$ & $r_{A}$ & $r_{B}$ & $\beta $ in $W_{64,2}$ & $%
\left\vert Aut\left( \mathcal{B}_{64,i}\right) \right\vert $ \\ \hline\hline
$\mathcal{B}_{64,1}$ & $\left( 0101110001100111\right) $ & $\left(
1011010101010100\right) $ & 0 & $2^{5}$ \\ \hline
$\mathcal{B}_{64,2}$ & $\left( 1010101011110101\right) $ & $\left(
0110110001001001\right) $ & 8 & $2^{5}$ \\ \hline
$\mathcal{B}_{64,3}$ & $\left( 0111000010111110\right) $ & $\left(
0011001111010010\right) $ & 16 & $2^{5}$ \\ \hline
$\mathcal{B}_{64,4}$ & $\left( 1011001001100101\right) $ & $\left(
0110100101000000\right) $ & 24 & $2^{5}$ \\ \hline
$\mathcal{B}_{64,5}$ & $\left( 1100110010100010\right) $ & $\left(
1110010010111110\right) $ & 32 & $2^{5}$ \\ \hline
$\mathcal{B}_{64,6}$ & $\left( 1000111110101000\right) $ & $\left(
0110100011011101\right) $ & 40 & $2^{5}$ \\ \hline
$\mathcal{B}_{64,7}$ & $\left( 0100010000111100\right) $ & $\left(
1011101010000001\right) $ & 48 & $2^{5}$ \\ \hline
$\mathcal{B}_{64,8}$ & $(1111011010000100)$ & $(1101010101100011)$ & 56 & $%
2^{5}$ \\ \hline
$\mathcal{B}_{64,9}$ & $\left( 1000110110110001\right) $ & $\left(
1011001001101011\right) $ & 64 & $2^{6}$ \\ \hline
$\mathcal{B}_{64,10}$ & $\left( 0101110111001111\right) $ & $\left(
0001000111000110\right) $ & 72 & $2^{5}$ \\ \hline
\end{tabular}%
\end{table}

In order to fit the upcoming tables we use hexadecimal number sytem. The
one-to-one correspondence between hexadecimals and binary $4$ tuples is as
follows:
\begin{eqnarray*}
0 &\leftrightarrow &0000,\ 1\leftrightarrow 0001,\ 2\leftrightarrow 0010,\
3\leftrightarrow 0011, \\
4 &\leftrightarrow &0100,\ 5\leftrightarrow 0101,\ 6\leftrightarrow 0110,\
7\leftrightarrow 0111, \\
8 &\leftrightarrow &1000,\ 9\leftrightarrow 1001,\ A\leftrightarrow 1010,\
B\leftrightarrow 1011, \\
C &\leftrightarrow &1100,\ D\leftrightarrow 1101,E\leftrightarrow 1110,\
F\leftrightarrow 1111.
\end{eqnarray*}

To express the elements of $R_{2}$ we use the ordered basis $\left\{
uv,v,u,1\right\} $. As an application of Theorem \ref{twoblock} on $R_{2}$
binary self-dual Type I codes of length 64 are constructed, which are listed
in Table \ref{tab:R2}.

\begin{table}[tbp]
\caption{Extremal self-dual $\left[ 64,32,12\right] _{2}$ codes as Gray
images by Theorem\ \protect\ref{twoblock} on $R_{2}$.}
\label{tab:R2}\centering
\begin{tabular}{||c|c|c|c|c||c||}
\hline
$\mathcal{D}_{64,i}$ & $\lambda $ & $r_{A}$ & $r_{B}$ & $\beta $ in $%
W_{64,2} $ & $\left\vert Aut\left( \mathcal{D}_{64,i}\right) \right\vert $
\\ \hline\hline
$\mathcal{D}_{64,1}$ & $5$ & $(6,5,A,E)$ & $(D,7,D,5)$ & 0 & $2^{5}$ \\
\hline
$\mathcal{D}_{64,2}$ & $B$ & $\left( 1,6,3,B\right) $ & $\left(
1,8,7,4\right) $ & 1 & $2^{3}$ \\ \hline
$\mathcal{D}_{64,3}$ & $7$ & $(9,7,9,D)$ & $(5,C,A,A)$ & 4 & $2^{3}$ \\
\hline
$\mathcal{D}_{64,4}$ & $7$ & $(5,2,3,9)$ & $(7,4,3,8)$ & 5 & $2^{3}$ \\
\hline
$\mathcal{D}_{64,5}$ & $7$ & $(6,F,A,B)$ & $(C,9,3,1)$ & 8 & $2^{3}$ \\
\hline
$\mathcal{D}_{64,6}$ & $D$ & $(9,E,D,B)$ & $(F,A,5,0)$ & 9 & $2^{3}$ \\
\hline
$\mathcal{D}_{64,7}$ & $7$ & $(4,9,F,9)$ & $(7,8,B,6)$ & 12 & $2^{4}$ \\
\hline
$\mathcal{D}_{64,8}$ & $D$ & $(F,E,3,3)$ & $(D,2,B,4)$ & 13 & $2^{3}$ \\
\hline
$\mathcal{D}_{64,9}$ & $3$ & $(3,3,5,B)$ & $(8,4,2,9)$ & 16 & $2^{5}$ \\
\hline
$\mathcal{D}_{64,10}$ & $B$ & $(D,4,7,9)$ & $(D,6,B,0)$ & 17 & $2^{3}$ \\
\hline
$\mathcal{D}_{64,11}$ & $D$ & $(3,7,9,B)$ & $(3,2,E,E)$ & 20 & $2^{3}$ \\
\hline
$\mathcal{D}_{64,12}$ & $7$ & $(E,7,0,4)$ & $(B,3,5,7)$ & 21 & $2^{3}$ \\
\hline
$\mathcal{D}_{64,13}$ & $7$ & $(C,1,1,9)$ & $(5,2,F,8)$ & 24 & $2^{4}$ \\
\hline
$\mathcal{D}_{64,14}$ & $7$ & $(8,B,6,2)$ & $(1,1,D,F)$ & 25 & $2^{3}$ \\
\hline
$\mathcal{D}_{64,15}$ & $B$ & $(C,D,0,3)$ & $(D,E,7,5)$ & 28 & $2^{5}$ \\
\hline
$\mathcal{D}_{64,16}$ & $B$ & $(6,7,C,4)$ & $(F,D,9,D)$ & 29 & $2^{3}$ \\
\hline
$\mathcal{D}_{64,17}$ & $3$ & $(5,4,1,4)$ & $(7,B,7,6)$ & 32 & $2^{5}$ \\
\hline
$\mathcal{D}_{64,18}$ & $D$ & $(8,5,4,2)$ & $(1,B,5,1)$ & 33 & $2^{3}$ \\
\hline
$\mathcal{D}_{64,19}$ & $B$ & $(9,9,C,3)$ & $(8,1,A,F)$ & 36 & $2^{3}$ \\
\hline
$\mathcal{D}_{64,20}$ & $5$ & $(E,A,D,6)$ & $(F,3,B,D)$ & 48 & $2^{5}$ \\
\hline
$\mathcal{D}_{64,21}$ & $D$ & $(6,9,0,3)$ & $(A,9,3,1)$ & 64 & $2^{5}$ \\
\hline
$\mathcal{D}_{64,22}$ & $5$ & $(A,9,D,1)$ & $(F,8,5,E)$ & 80 & $2^{7}$ \\
\hline
\end{tabular}%
\end{table}

\begin{remark}
Note that the first codes with weight enumerators $\beta =$1, 5, 13, 17, 21,
25, 29, 33 and 80 in $W_{64,2}$ are recently constructed in \cite%
{karadeniz64,karadenizfc}. Those are reconstructed in Table \ref{tab:R2}.
\end{remark}

The construction in Theorem \ref{bordered} yields extremal binary self-dual
codes of length 64. Those are listed in Table \ref{tab:bordered}, codes with
weight enumerators $\beta =$29, 59 and 74 in $W_{64,1}$ are constructed for
the first time.

\begin{table}[tbp]
\caption{Extremal self-dual Type I $\left[ 64,32,12\right] _{2}$ codes by
Theorem \protect\ref{bordered} on $\mathbb{F}_{2}$.}
\label{tab:bordered}\centering
\begin{tabular}{||c|c|c|c||c||}
\hline
$\mathcal{C}_{64,i}$ & $r_{A}$ & $r_{B}$ & $\beta $ in $W_{64,1}$ & $%
\left\vert Aut\left( \mathcal{C}_{64,i}\right) \right\vert $ \\ \hline\hline
$\mathcal{C}_{64,1}$ & $\left( 001101000000011\right) $ & $\left(
011000010011011\right) $ & 14 & $2^{2}\times 3\times 5$ \\ \hline
$\mathcal{C}_{64,2}$ & $\left( 010001101111110\right) $ & $\left(
111111100011110\right) $ & 14 & $2\times 3\times 5$ \\ \hline
$\mathcal{C}_{64,3}$ & $(001101111000010)$ & $(110010110110011)$ & \textbf{29%
} & $2\times 3\times 5$ \\ \hline
$\mathcal{C}_{64,4}$ & $\left( 111010001101101\right) $ & $\left(
100101000001111\right) $ & 44 & $2\times 3\times 5$ \\ \hline
$\mathcal{C}_{64,5}$ & $\left( 101000110101111\right) $ & $\left(
000000000011100\right) $ & 44 & $2^{2}\times 3\times 5$ \\ \hline
$\mathcal{C}_{64,6}$ & $(101101011101111)$ & $(001000001110001)$ & \textbf{59%
} & $2\times 3\times 5$ \\ \hline
$\mathcal{C}_{64,7}$ & $(011000100111111)$ & $(011000000000010)$ & \textbf{74%
} & $2^{2}3\times 5$ \\ \hline
\end{tabular}%
\end{table}

\subsection{New extremal binary self-dual codes of length 66}

A self-dual $\left[ 66,33,12\right] _{2}$-code has a weight enumerator in
one of the following forms (\cite{dougherty1}):%
\begin{eqnarray*}
W_{66,1} &=&1+\left( 858+8\beta \right) y^{12}+\left( 18678-24\beta \right)
y^{14}+\cdots ,\text{ where }0\leq \beta \leq 778, \\
W_{66,2} &=&1+1690y^{12}+7990y^{14}+\cdots \text{ } \\
\text{and }W_{66,3} &=&1+\left( 858+8\beta \right) y^{12}+\left(
18166-24\beta \right) y^{14}+\cdots ,\text{ where }14\leq \beta \leq 756,
\end{eqnarray*}%
For a list of known codes in $W_{66,1}$ we refer to \cite%
{karadenizfc,tufekci,yankovieee} and codes with weight enumerator $W_{66,2}$
exist. Together with the ones constructed in \cite{kaya} the existence of
the codes is known for $\beta =$28, 29, 30, 31, 32, 33, 34, 35, 36, 37, 38,
43, 44, 45, 46, 47, 48, 49, 50, 51, 54, 55, 56, 57, 58, 59, 60, 61, 62, 63,
66, 67,\ 70, 71, 73, 74, 75, 76, 77, 78, 79 and 80\ in $W_{66,3}$.

In this work, we construct 15 codes with new weight enumerators in $W_{66,3}$%
. More precisely, the codes with weight enumerators $\beta =$46, 52, 53, 61,
64, 81, 82, 83, 84, 85, 86, 87, 88, 90 and 92 in $W_{66,3}$ are constructed
for the first time in the literature. The codes that are obtained by
applying Theorem \ref{ext} are listed in Table \ref{tab:66table1} and Table %
\ref{tab:66table2}.
\begin{table}[tbp]
\caption{New codes in $W_{66,3}$ by Theorem \protect\ref{ext} $\mathbb{F}%
_{2} $. (10 codes)}
\label{tab:66table1}\centering
\begin{tabular}{||c|c||c||}
\hline
$\mathcal{C}_{i}$ & $X$ & $\beta $ in $W_{66,3}$ \\ \hline\hline
$\mathcal{C}_{64,5}$ & $\left( 11001101110010001101111011101100\boldsymbol{1}%
^{\boldsymbol{32}}\right) $ & \textbf{52} \\ \hline
$\mathcal{C}_{64,4}$ & $\left( 11110101011010111010110100000001\boldsymbol{1}%
^{\boldsymbol{32}}\right) $ & \textbf{61} \\ \hline
$\mathcal{C}_{64,5}$ & $\left( 00100101011000001000111010101100\boldsymbol{1}%
^{\boldsymbol{32}}\right) $ & \textbf{64} \\ \hline
$\mathcal{C}_{64,7}$ & $\left( 11110100100100011110100101100101\boldsymbol{1}%
^{\boldsymbol{32}}\right) $ & \textbf{81} \\ \hline
$\mathcal{C}_{64,7}$ & $\left( 11110011010001000000111101011110\boldsymbol{1}%
^{\boldsymbol{32}}\right) $ & \textbf{83} \\ \hline
$\mathcal{C}_{64,7}$ & $\left( 00101011111010100110001001011110\boldsymbol{1}%
^{\boldsymbol{32}}\right) $ & \textbf{84} \\ \hline
$\mathcal{C}_{64,7}$ & $(00101111000111010000101010111101\boldsymbol{1}^{%
\boldsymbol{32}})$ & \textbf{85} \\ \hline
$\mathcal{C}_{64,7}$ & $\left( 11001000011100000101010011000110\boldsymbol{1}%
^{\boldsymbol{32}}\right) $ & \textbf{87} \\ \hline
$\mathcal{C}_{64,7}$ & $\left( 11111000010000100011011010100101\boldsymbol{1}%
^{\boldsymbol{32}}\right) $ & \textbf{90} \\ \hline
$\mathcal{C}_{64,7}$ & $\left( 01010110000100110011000110000011\boldsymbol{1}%
^{\boldsymbol{32}}\right) $ & \textbf{92} \\ \hline
\end{tabular}%
\end{table}
\begin{table}[tbp]
\caption{New codes in $W_{66,3}$ by Theorem \protect\ref{ext} $\mathbb{F}%
_{2} $. (5 codes)}
\label{tab:66table2}\centering
\begin{tabular}{||c|c||c||}
\hline
$\mathcal{C}_{64,i}$ & $X$ & $\beta $ \\ \hline\hline
$\mathcal{C}_{64,3}$ & $\left(
1100110010100000010111010111000010110000110011010000111001101100\right) $ &
\textbf{46} \\ \hline
$\mathcal{C}_{64,5}$ & $\left(
0010001010110101110110100110011000110110101100100000110000111101\right) $ &
\textbf{53} \\ \hline
$\mathcal{C}_{64,7}$ & $\left(
0000101100110000000001100100101110100001010010101111110011001001\right) $ &
\textbf{82} \\ \hline
$\mathcal{C}_{64,7}$ & $\left(
0011000101011100001001101011011000101100110100101110000011100010\right) $ &
\textbf{86} \\ \hline
$\mathcal{C}_{64,7}$ & $\left(
1110110000111111101101111011001110101101010101100100001101111001\right) $ &
\textbf{88} \\ \hline
\end{tabular}%
\end{table}

\subsection{New extremal binary self-dual codes of length 68}

The weight enumerator of a self-dual $\left[ 68,34,12\right] _{2}$ code is
in one of the following forms (\cite{dougherty1}):
\begin{eqnarray*}
W_{68,1} &=&1+\left( 442+4\beta \right) y^{12}+\left( 10864-8\beta \right)
y^{14}+\cdots , \\
W_{68,2} &=&1+\left( 442+4\beta \right) y^{12}+\left( 14960-8\beta
-256\gamma \right) y^{14}+\cdots
\end{eqnarray*}%
where $\beta $ and $\gamma $ are parameters. Recently, $32$ new codes are
obtained in \cite{kaya} and $178$ new codes including the first examples
with $\gamma =3$ in $W_{68,2}$ are obtained in \cite{kayayildiz}. For a list
of known codes in $W_{68,1}$ we refer to \cite{yankovieee,tufekci}.
Recently, new codes in $W_{68,2}$ are obtained in \cite%
{kayayildiz,kaya,tufekci,yankovieee} together with these, codes exist for $%
W_{68,2}$ when
\begin{eqnarray*}
\gamma &=&0,\ \beta =11,22,33,44,\dots ,154,165,187,209,231,303 \\
\text{ or }\beta &\in &\left\{ 2m|m=\text{17, 20, 88, 99, 102, 110, 119,
136, 165 or }78\leq m\leq 86\right\} ; \\
\gamma &=&1,\ \beta =49,57,59,\dots ,160\text{ or }\beta \in \left\{ 2m|m=%
\text{27, 28, 29, 95, 96 or }81\leq m\leq 89\right\} ; \\
\gamma &=&2,\ \beta =65,69,71,77,81,159,186\text{ or }\beta \in \left\{
2m|30\leq m\leq 68,\text{ }70\leq m\leq 91\right\} \text{ or} \\
\beta &\in &\left\{ 2m+1|42\leq m\leq 69,\text{ }71\leq m\leq 77\right\} ; \\
\gamma &=&3,\ \beta =101,107,117,123,127,133,137,141,145,147,149,153,159,193%
\text{ or } \\
\beta &\in &\left\{ 2m|m=\text{44,45,47,48,50,51,52,54,}\dots \text{,61,63,}%
\dots \text{,66,68,}\dots \text{,72,74,77,}\dots \text{,84,86,}\dots \text{%
,90,94,98}\right\} ; \\
\gamma &=&4\text{, }\beta \in \left\{ 2m|m=\text{%
43,48,49,51,52,54,55,56,58,60,61,62,64,65,67,}\dots \text{,71,75,}\dots
\text{,78,80,87,97}\right\} ;\text{ } \\
\gamma &=&6\text{ with }\beta \in \left\{ 2m|m=\text{69, 77, 78, 79, 81, 88}%
\right\} \text{.}
\end{eqnarray*}

We construct the codes with weight enumerators in $W_{68,2}$ for $\gamma =0$
and $\beta =$17, 155, 157, 187, 221 and 255; $\gamma =3$ and $\beta =$103,
105, 115, 119, 121, 124, 125, 129, 131, 134, 150, 178, 182, 184, 190 and 194.

By using Theorem \ref{twoblock} we were able to obtain codes with weight
enumerators $\gamma =0$ and $\beta =$\textbf{17}, 34, 51, 68, 85, 102, 119,
136, 153, 170, \textbf{187}, 204, \textbf{221}, 238, \textbf{255}, 272 for $%
W_{68,2}$. In order to save space we only list the new codes in Table \ref%
{tab:68table1}.
\begin{table}[]
\caption{New codes in $W_{68,2}$ by Theorem \protect\ref{twoblock} on $%
\mathbb{F}_{2}$. (4 codes)}
\label{tab:68table1}\centering
\begin{tabular}{||c|c|c|c||c||}
\hline
$\mathcal{C}_{68,i}$ & $r_{A}$ & $r_{B}$ & $\beta $ in $W_{68,2}$ & $%
\left\vert Aut\left( \mathcal{C}_{68,i}\right) \right\vert $ \\ \hline
$\mathcal{C}_{68,1}$ & $\left( 01111110101111011\right) $ & $\left(
11001000101001011\right) $ & \textbf{17} & $2\times 17$ \\ \hline
$\mathcal{C}_{68,2}$ & $\left( 11110001011001010\right) $ & $\left(
11010100001011010\right) $ & \textbf{187} & $2\times 17$ \\ \hline
$\mathcal{C}_{68,3}$ & $(00110001101111011)$ & $(01000010000000100)$ &
\textbf{221} & $2\times 17$ \\ \hline
$\mathcal{C}_{68,4}$ & $\left( 11010010110010011\right) $ & $\left(
10100001001111100\right) $ & \textbf{255} & $2\times 17$ \\ \hline
\end{tabular}%
\end{table}

\begin{example}
Let $\mathcal{C}_{1}$ and $\mathcal{C}_{2}$ be the $R_{1}$-extensions of $%
\varphi _{u}\left( \mathcal{D}_{64,21}\right) ~$with respect to Theorem
respectively with $c_{1}=1+u,c_{2}=1$ and
\begin{eqnarray*}
X_{1} &=&\left(
3,u,0,0,0,0,1,u,3,0,3,u,1,1,0,0,u,1,1,0,1,3,1,u,1,3,0,u,0,0,3,3\right) \\
X_{2} &=&\left(
1,u,u,0,0,u,1,0,3,0,3,0,1,3,u,0,u,1,1,u,3,3,1,u,1,1,u,0,u,u,1,1\right) .
\end{eqnarray*}%
Then, $\varphi \left( \mathcal{C}_{1}\right) $ and $\varphi \left( \mathcal{C%
}_{2}\right) $ are extremal binary self-dual codes of length 68 with weight
enumerators respectively $\gamma =0,\beta =155$ and $\gamma =0,\beta =157$
in $W_{68,2}$.
\end{example}

The codes over $R_{2}$ in Table \ref{tab:R2} are mapped to $R_{1}$ and by
applying the extension theorem we were able to obtain 16 new codes of length
$68$ as binary images of the extensions. The codes are listed in Table \ref%
{tab:68table2}.

\begin{table}[]
\caption{New extremal binary self-dual codes of length 68 with $\protect%
\gamma =3$ in $W_{68,2}$ by Theorem \protect\ref{ext} on $R_{1}$(16 codes)}
\label{tab:68table2}\centering
\begin{tabular}{||c|c|c||c||}
\hline
Code & $c$ & $X$ & $\beta $ in $W_{68,2}$ \\ \hline\hline
$\mathcal{D}_{64,10}$ & $1$ & $\left(
1u1001030u3103111u3130u01u0u0331\right) $ & $\mathbf{103}$ \\ \hline
$\mathcal{D}_{64,10}$ & $3$ & $\left(
303u0101uu3301113u31300u1000u113\right) $ & $\mathbf{105}$ \\ \hline
$\mathcal{D}_{64,10}$ & $1$ & $\left(
3u1u03u3u03303331u313u0u30uuu331\right) $ & $\mathbf{115}$ \\ \hline
$\mathcal{D}_{64,10}$ & $1$ & $\left(
3010u1u10u1103313u111u0u3uu00113\right) $ & $\mathbf{119}$ \\ \hline
$\mathcal{D}_{64,10}$ & $3$ & $\left(
301001030u1303131u31100u3u000133\right) $ & $\mathbf{121}$ \\ \hline
$\mathcal{D}_{64,10}$ & $3$ & $\left(
uuu101u303u3u11uu3u1003uu1u1001u\right) $ & $\mathbf{124}$ \\ \hline
$\mathcal{D}_{64,10}$ & $1$ & $\left(
1u1uu3030u11u3133u3330uu3u00u333\right) $ & $\mathbf{125}$ \\ \hline
$\mathcal{D}_{64,10}$ & $3$ & $\left(
1u1003u10u33u313303330u01000u111\right) $ & $\mathbf{129}$ \\ \hline
$\mathcal{D}_{64,10}$ & $3$ & $\left(
1u10u1030u13u31330331u001uu0u111\right) $ & $\mathbf{131}$ \\ \hline
$\mathcal{D}_{64,10}$ & $3$ & $\left(
1u303011uu01000u10303133u0u10u33\right) $ & $\mathbf{134}$ \\ \hline
$\mathcal{D}_{64,10}$ & $1$ & $\left(
000101u10103u110u1030u30u101uu30\right) $ & $\mathbf{150}$ \\ \hline
$\mathcal{D}_{64,22}$ & $1$ & $\left(
3u000103031u00u30uu03u30u0uu11u1\right) $ & $\mathbf{178}$ \\ \hline
$\mathcal{D}_{64,22}$ & $1$ & $\left(
3u0uu3u3u13uuuu3uuuu1u3uuu0031u3\right) $ & $\mathbf{182}$ \\ \hline
$\mathcal{D}_{64,22}$ & $3$ & $\left(
u1100u001uu0uuu311331u101u03111u\right) $ & $\mathbf{184}$ \\ \hline
$\mathcal{D}_{64,22}$ & $1$ & $\left(
1uu00301u33uuu03u00u3u1u0u0031u3\right) $ & $\mathbf{190}$ \\ \hline
$\mathcal{D}_{64,22}$ & $1$ & $\left(
30000101013u00u100003u100u001303\right) $ & $\mathbf{194}$ \\ \hline
\end{tabular}%
\end{table}

\begin{theorem}
The existence of extremal self-dual binary codes is known for $15$
parameters in $W_{64,1};$ $57$ parameters in $W_{66,3}$ and $465$ parameters
in $W_{68,2}.$
\end{theorem}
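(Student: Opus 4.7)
The statement collects the combined tally of parameter values for which extremal binary self-dual codes of the relevant length are now known. The plan is a straightforward union-and-count, carried out separately for each of the three weight enumerators, combining the previously known values recalled above with the new values produced by our constructions. In each case the verification reduces to a set-union followed by a cardinality count, and no new coding-theoretic content beyond the earlier results of this paper is needed.

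First, for $W_{64,1}$ I would list the $12$ parameters $\beta\in\{14,18,22,25,32,36,39,44,46,53,60,64\}$ documented in \cite{karadeniz64,karadenizfc,yankov} and adjoin the three values $\beta\in\{29,59,74\}$ realized by $\mathcal{C}_{64,3},\mathcal{C}_{64,6},\mathcal{C}_{64,7}$ in Table \ref{tab:bordered} (produced via Theorem \ref{bordered} over $\mathbb{F}_2$). The two sets are disjoint, so the union has $15$ elements. Second, for $W_{66,3}$ the plan is analogous: take the $42$ values from the literature list recalled from \cite{karadenizfc,tufekci,yankovieee,kaya} and union it with the $15$ new values $\beta\in\{46,52,53,61,64,81,82,83,84,85,86,87,88,90,92\}$ obtained in Tables \ref{tab:66table1}--\ref{tab:66table2} by applying Theorem \ref{ext} on $\mathbb{F}_2$ to the length-$64$ codes of Table \ref{tab:bordered}. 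This yields $57$.

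Third, for $W_{68,2}$ I would organize the count in blocks indexed by $\gamma\in\{0,1,2,3,4,6\}$. For each block, expand the compressed parameter descriptions (explicit lists, the arithmetic progressions written with ``$\ldots$'', and the index sets of the form $\{2m\mid\ldots\}$) into a fully enumerated set of $\beta$ values, then union with the new contributions coming from this paper: the six new $\gamma=0$ parameters produced by Theorem \ref{twoblock} on $\mathbb{F}_2$ (Table \ref{tab:68table1}), the two values $\gamma=0,\beta=155,157$ obtained from the $R_1$-extension example on $\varphi_u(\mathcal{D}_{64,21})$, and the $16$ parameters with $\gamma=3$ from Table \ref{tab:68table2} obtained by Theorem \ref{ext} on $R_1$ applied to the Gray images of the $R_2$-codes of Table \ref{tab:R2}. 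Summing the six block sizes then gives $465$.

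The main obstacle is the $W_{68,2}$ bookkeeping. The compressed notation for the known list (e.g.\ sets of the form $\{2m\mid 42\le m\le 69\}$ merged with sporadic explicit values and with long arithmetic progressions) makes it easy to double count, and some of the ``new'' values (for instance $\beta=187$ at $\gamma=0$) already appear in the prior list and must not be tallied twice. The safe way to carry this out is a careful block-by-block expansion into explicit finite sets, cross-checked against the tables of the previous subsection, after which the totals $15$, $57$, and $465$ drop out by direct enumeration.
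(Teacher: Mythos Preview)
Your proposal is correct and is exactly the argument the paper has in mind: the theorem in question is stated without proof as a summary count, so the only content is the union-and-count you describe, combining the literature lists recalled earlier in Section~\ref{newcodes} with the new parameters produced in Tables~\ref{tab:bordered}--\ref{tab:68table2}. Your cautionary remark about the $W_{68,2}$ bookkeeping (e.g.\ $\gamma=0,\ \beta=187$ appearing both in the prior list and among the ``new'' constructions) is well taken and is precisely the kind of overlap one must resolve when reproducing the totals.
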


\begin{remark}
The binary generator matrices of the\ constructed new codes are available
online at \cite{web}.
\end{remark}

\end{document}